\theoremstyle{plain}
\newtheorem{theorem}{Theorem}[section]
\theoremstyle{definition}
\theoremstyle{remark}
\numberwithin{equation}{section}
\numberwithin{figure}{section}
\begin{document}
%====================================================== 

\title[Non-associative magnetic translations]{Non-associative magnetic translations from parallel transport in projective Hilbert bundles}

\author[Jouko Mickelsson]{Jouko Mickelsson}
\address[Jouko Mickelsson]{Department of Mathematics and Statistics, University of Helsinki}
\email{jouko@kth.se}

\author[Michael Murray]{Michael Murray}
\address[Michael Murray]{School of Mathematical Sciences, University of Adelaide}
\email{michael.murray@adelaide.edu.au}

\maketitle
\begin{abstract} 

The non-associativity of translations in a quantum system with magnetic field
background has received renewed interest in association with topologically trivial
gerbes over $\mathbb{R}^n.$ The non-associativity is described by a 3-cocycle
of the group $\mathbb{R}^n$ with values in the unit circle $S^1.$ 
The gerbes over a space $M$ are topologically classified by the Dixmier-Douady
class which is an element of $\mathrm{H}^3(M,\mathbb{Z}).$ However, there is a finer
description in terms of local differential forms of degrees $d=0,1,2,3$ and
the case of the magnetic translations for $n=3$ the 2-form part is the magnetic field
$B$ with non zero divergence.  In this paper we study a quantum field theoretic 
construction in terms of $n$-component fermions on  a circle.
The non associativity arises when trying to lift the translation group action
on the 1-particle system to the second quantized system. 

MSC classification:  81T50 (primary); 22E67, 81R15, 22E70 (secondary)

\thanks{The second author acknowledges support under the  Australian
Research Council's {\sl Discovery Projects} funding scheme (project number DP180100383).}

\end{abstract}

\section{Introduction}

The motivation for the present short note is to understand the recent paper by
Bunk, M\"uller and Szabo \cite{BMS} in terms of quantization of Dirac operators
on a real line or on the circle coupled to an abelian vector potential with gauge
group $\mathbb{R}^n$ or the torus $T^n=\mathbb{R}^n/\mathbb{Z}^n.$
The central topic in \cite{BMS} is a 3-cocycle on $\mathbb{R}^n$ arising from composing certain
functors coming from translations acting on differential data of a topologically
trivial gerbe on $\mathbb{R}^n.$ The non-associativity in the case of a magnetic field with
sources in the case $n=3$ was suggested already long ago in \cite{Jac}. An interpretation
of the 3-cocycle in terms of representations of canonical anticommutator algebras
was then proposed in \cite{Car}.

In this paper we interpret the magnetic translations as (non periodic) $\mathbb{R}^n$ valued  gauge transformations on the unit interval $[0,1]$ acting on fermions with $n$ complex components. They actually define true operators on the level
of 1-particle Dirac operators. However, they cannot be lifted to unitary
operators in the fermionic Fock space; if they could, there would be no 3-cocycle
since the composition of linear operators is associative. Nevertheless, these
gauge transformations define functors acting on certain categories of representations
of canonical anticommutation relations. The composition of functors respects
the group law in $\mathbb{R}^n$ only modulo the action of automorphisms in the
Fock space; these automorphisms come from a projective representation of an abelian
gauge group. 

The calculation of the group 3-cocycle is based on the interpretation of the (non periodic) gauge transformations
as parallel transport in a projective  Hilbert bundle of  fermionic Fock spaces parametrized by
a simply connected Lie group $G;$ in the case of the application to magnetic translations this group
is $\mathbb{R}^n.$ This method can be viewed as a geometrization of the earlier work  \cite{Mi09}.

Let $G$ be a  simply connected Lie group.  Let $H$ be the space of square integrable periodic functions
on the unit interval $[0,1]$  taking values in the complex vector space $\mathbb{C}^n$ with a
unitary $G$ action through a representation $\rho$ of $G.$  For $g\in G$ let $P_g$ be the category (actually a Frechet manifold) of smooth
paths $f(t)$ with $0\leq t \leq 1$ in $G$ starting from the identity $e$ in $G$ and with the end point at $g,$ with the additional condition
that $A= f^{-1} df$ is smooth and periodic.   Then the set $P = \{P_g|g\in G\}$ defines a principal bundle over $G$ with fibers diffeomorphic to
the based loop group $\Omega G.$ 

Each $f\in P$ defines an 1-dimensional antihermitean Dirac operator $D_f = \frac{d}{dt} + A$ with domain $H_0   \subset H$ consisting of smooth periodic
functions.  Since  $H$ can be naturally identified as the space of square integrable functions on the unit circle we can define a polarization $H= H_+ \oplus H_-$
to positive and non-positive Fourier modes. This choice defines a fermionic Fock space $\mathcal{F}$ carrrying an irreducible representation of the
canonical anticommutation relations (CAR) algebra; this algebra is generated by elements $a^*(v), a(v)$ with $v\in H$ such that the annihilation operators
$a(v)$ depend antilinearly on the argument whereas the creation operators depend linearly on $v.$ The only nonzero anticommutators are
$$ a^*(v) a(u) + a(u) a^*(v) = <u,v> \mathbf{1}$$
where $<\cdot, \cdot>$ is the inner product in the complex Hilbert space. The representation in $\mathcal{F}$ is characterized by the existence of a cyclic
vector $ \psi_0,$ the vacuum vector, such that
$$a(u) \psi_0 = 0 = a^*(v) \psi_0,   \text{ for $u\in H_+$ and $v\in H_-$}. $$
As a reference to CAR representations and their automorphisms see \cite{Ar}. 

The category of the Dirac operators $D_f$ can be quantized (using the standard normal ordering associated to the Fourier modes in $H$) to define
quantum Dirac operators $\hat{D }_f$ acting on the dense subspace of the Fock space consisting of polynomials of the creation and annihilation
operators acting on the vacuum vector. The based loop group $\Omega G$ acts  in $\mathcal{F}$ through a central extension $\widehat{\Omega G }.$
The Lie algebra $\hat{\mathfrak{g}}$  of the central extension is defined by the 2-cocycle
\begin{equation} c_2(X,Y) = \frac{1}{2\pi i } \int  \text{tr} X dY \label{2cocycle} \end{equation}
with values in the group of purely imaginary numbers, where the trace is evaluated in the representation of $G$ in $\mathbb{C }^n.$  The 2-cocycle defines a left invariant
form on the loop group with periods  in $2\pi \mathbb{Z}$. Elements $h\in \Omega G$ define automorphisms of the category of Dirac operators
$D_f,$ parametrized by paths  $f$ with $f(1)= g,$ through right multiplication $f\mapsto fh$ which corresponds to the gauge transformation $A\mapsto h^{-1} A h
+ h^{-1}dh.$ Using the action of the central extension of the loop group in $\mathcal{F}$ this action can be lifted to the quantized Dirac operators through
$\hat h^{-1} \hat{D}_f \hat h = \hat{D}_{fh}.$ The projective representation of $\Omega G$ can be viewed as a homomorphism of $\Omega G$ to the projective unitary group
$PU(\mathcal F)$ of the Hilbert space $\mathcal F,$ so the principal bundle $P$ defines a $PU(\mathcal F)$ bundle over $G$. Projective Hilbert bundles are classified by their
Dixmier-Douady class $\omega \in  \mathrm{H}^3(G,\mathbb Z)$. The class $\omega$ in this case depends on the representation of $G$ in $\mathbb{C}^n.$ In the case of $G=SU(n)$ 
in its defining representation this class is the generator in $\mathbb{Z}=  \mathrm{H}^3(G,\mathbb{Z}).$ For $G= SU(n)$ an explicit representation of the generator as a closed 3-form is given as
$$ \frac{1}{24\pi^2} \text{tr} (g^{-1}dg)^3  \label{3-form}$$
with the trace evaluated in the defining representation of $SU(n).$  See \cite{CMM}, \cite{CMM2} for more 
information on gauge group cocycles in relation to quantized Dirac operators.

 \section{Transgression from the Dixmier-Douady class and the 3-cocycle}

For a given $g$ the elements $f\in P_g$ define a category of 1-particle Dirac operators associated to potentials $A=f^{-1} df$ for $f\in P_g$ and the quantized Dirac operators $\hat D_f.$ 
If $g,g' \in G$ is a pair of different group elements  and $f\in P_g, f' \in P_{g'}$ then \it formally \rm  $D_f$ is equivalent to $D_{f'}$ through a gauge transformation $D_f\mapsto D_{f'}$ 
by $f\mapsto fh$ with $h:[0,1]  \to G,$  $h(t) = f^{-1} f'.$ But $h$ is nonperiodic and therefore it does not preserve the domain of the Dirac operators. A consequence of this is that there is
no quantization $\hat h$ acting in the Fock space which would take $\hat D_f$ to $\hat D_{f'}$ by conjugation. Thus we can only say that the path $h$ with end point $g^{-1}g'$ takes
one category of Dirac operators (parametrized by $g$) to another category (parametrized by $g'$).

Next fix a connection in the principal bundle $P\to G.$  A based loop  $\gamma$ at $e\in  G$ defines through parallel transport an element $h(\gamma)$ in the model fiber
$\Omega G$ and thus an element in $PU(\mathcal F).$ We have now a map $\phi: \Omega G \to PU(\mathcal F).$ The canonical circle bundle $S^1 \to U(\mathcal F) \to PU(\mathcal F)$
with curvature $F$
pulls back to a circle bundle over $\Omega G$ and the Chern class of this circle bundle is represented by a  closed 2-form $\theta=\phi^* F.$

\begin{theorem} The Chern class $\theta$ on $\Omega G$ is equal to the transgression of the Dixmier-Douady class $\omega$ on $G,$ that is,
$$\theta(X,Y) = \int_t \omega (h^{-1}dh, X,Y).$$
In particular, since $\mathrm{H}^3(G,\mathbb{Z}) = \mathbb{Z}$ when $G$ is a simple compact Lie goup, we get
$$\theta(X,Y) = \frac{k}{8 \pi^2} \int_t \mathrm{tr}\, h^{-1}dh [X,Y].$$
The form $\theta$ is cohomologous to the form $\frac{i}{2\pi}c_2,$    namely $\frac{i}{2\pi} c_2= \theta + d\xi$ with $\xi(X) =\frac{1}{8\pi^2} \int_t \mathrm{tr}\, h^{-1}dh X.$
\end{theorem}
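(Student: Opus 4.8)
The plan is to establish the three claims in order, with the explicit level-$k$ formula and the cohomology relation falling out of a single geometric identification: that $\theta$ is the fibre integral of $\omega$ over the loop parameter. First I would make the transgression precise as integration along the fibre of the evaluation map $\mathrm{ev}\colon S^1\times\Omega G\to G$, $\mathrm{ev}(t,h)=h(t)$. A tangent vector to $\Omega G$ at a loop $h$ is recorded, via the Maurer--Cartan form, by a $\mathfrak g$-valued function $X(t)=h(t)^{-1}\delta h(t)$ on the circle, while the $S^1$-direction is sent to $h^{-1}\dot h=A$. Since the Dixmier--Douady $3$-form $\omega$ is bi-invariant, $\mathrm{ev}^*\omega$ evaluated on $(\partial_t,X,Y)$ is $\omega(A,X,Y)$, so the asserted formula $\theta(X,Y)=\int_t\omega(h^{-1}dh,X,Y)$ is exactly the statement that the pulled-back Chern form $\theta=\phi^*F$ agrees with $\int_{S^1}\mathrm{ev}^*\omega$.

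The geometric heart, and the step I expect to be the main obstacle, is to prove this agreement at the level of curvature. Here I would use that the $PU(\mathcal F)$-bundle associated to $P$ carries a gerbe connection whose $3$-curvature represents $\omega$, and that parallel transport in $P$ lifts to the associated projective bundle; for a based loop $h$ the holonomy is the structure-group element whose image under the projective representation is $\phi(h)$, and the failure to lift this holonomy from $PU(\mathcal F)$ to $U(\mathcal F)$ is measured by $F$. Rather than merely cite the transgression theorem for bundle gerbes, I would re-derive the curvature identity by a variation-of-holonomy argument: for a one-parameter family of loops I would compare the parallel transports around the boundary of the cylinder swept out in $G$ and use that their discrepancy is the integral of the $3$-curvature $\omega$ over that cylinder. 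Differentiating in the loop-space directions $X,Y$ then produces $\phi^*F$ on the left and the fibre integral $\int_{S^1}\mathrm{ev}^*\omega$ on the right, which is the first displayed equation.

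With the transgression identity in hand, the explicit formula is a direct computation. For $G$ simple and compact I would insert the generator $\omega=\frac{k}{24\pi^2}\,\mathrm{tr}\,(g^{-1}dg)^3$ and apply the elementary identity $\mathrm{tr}(\Theta\wedge\Theta\wedge\Theta)(u,v,w)=3\,\mathrm{tr}\big(\Theta(u)[\Theta(v),\Theta(w)]\big)$, which follows by cyclicity of the trace applied to the six terms of the antisymmetrisation. Evaluating on $(\partial_t,X,Y)$ with $\Theta(\partial_t)=h^{-1}\dot h$ and integrating over $t$ gives $\theta(X,Y)=\frac{k}{8\pi^2}\int_t\mathrm{tr}\,h^{-1}dh\,[X,Y]$, as asserted.

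Finally, for the cohomology relation I would compute $d\xi$ from the intrinsic formula $d\xi(X,Y)=X(\xi(Y))-Y(\xi(X))-\xi([X,Y])$, taking $X,Y$ to be left-invariant vector fields so that their $\mathfrak g$-valued functions are constant in the loop variable and only $A$ varies. The single ingredient required is the variation $\delta_X A=\dot X+[A,X]$, obtained by differentiating $A=h^{-1}\dot h$ along the flow $\partial_s h=hX$. Substituting, the bracket contributions reorganise, via $\mathrm{tr}([A,X]Y)=\mathrm{tr}(A[X,Y])$, into $\int_0^1\mathrm{tr}(A[X,Y])\,dt$ and so reproduce $\theta$, while the $\dot X,\dot Y$ terms, integrated by parts around the circle, collapse to a multiple of $\int_0^1\mathrm{tr}(X\dot Y)\,dt$, that is, to $c_2$. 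Assembling the two pieces identifies $d\xi$ with the difference of $\theta$ and $\frac{i}{2\pi}c_2$, which is the stated relation; the only delicate bookkeeping is in the signs and in the factor $2\pi$, precisely where such a calculation is most error-prone.
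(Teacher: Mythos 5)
Your route is genuinely different from the paper's. The paper realizes the $PU(\mathcal F)$-bundle as the base of a lifting bundle gerbe $(Q,P)$, chooses a bundle gerbe connection and curving $B$ with $R=\pi_1^*(B)-\pi_2^*(B)$ and $dB=\pi^*(\omega)$, and then concludes by pure naturality: restricting to $P_0=\{(e,\gamma)\}\simeq\Omega G$, the Chern class of the pulled-back circle bundle is the class of $R|_{P_0}$, i.e.\ of $B|_{\Omega G}$, which is the transgression. Note the paper is careful to claim only an equality of \emph{cohomology classes} (it even remarks that the pulled-back connection need not be a bundle gerbe connection), whereas you claim an equality of \emph{curvature forms} $\phi^*F=\int_{S^1}\mathrm{ev}^*\omega$. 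Your second and third steps are correct and are actually details the paper's proof omits entirely: the identity $\mathrm{tr}(\Theta\wedge\Theta\wedge\Theta)(u,v,w)=3\,\mathrm{tr}\bigl(\Theta(u)[\Theta(v),\Theta(w)]\bigr)$ does convert $\frac{k}{24\pi^2}\mathrm{tr}(g^{-1}dg)^3$ into $\frac{k}{8\pi^2}\int_t\mathrm{tr}\,h^{-1}dh\,[X,Y]$, and your computation of $d\xi$ from $\delta_XA=\dot X+[A,X]$, cyclicity of the trace, and integration by parts correctly produces the $\theta$-term and the $c_2$-term (up to the sign conventions you flag yourself).

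The gap is in the step you rightly call the geometric heart. You propose to obtain $\phi^*F=\int_{S^1}\mathrm{ev}^*\omega$ from the principle that, for a one-parameter family of loops, ``the discrepancy of the parallel transports is the integral of $\omega$ over the swept cylinder.'' But parallel transports in a $PU(\mathcal F)$-bundle are group elements, not phases; they have no canonical $S^1$-valued discrepancy. A phase discrepancy only appears after choosing local lifts to $U(\mathcal F)$, and controlling that phase in terms of $\omega$ requires exactly the data you are trying to avoid: a connection and curving for the lifting gerbe, compatible with the chosen connection on $U(\mathcal F)\to PU(\mathcal F)$. Without that data, your cylinder principle is not an available fact — it is a restatement of the transgression theorem you set out to re-derive, so the argument is circular. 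Moreover, as a statement about forms rather than classes it fails for independent arbitrary choices of the connection on $P$, the connection on $U(\mathcal F)\to PU(\mathcal F)$, and the gerbe data: each such change moves $\phi^*F$ or the fibre integral by an exact form, so only the cohomological statement is choice-independent, which is precisely why the paper routes its proof through naturality of the Chern class. To close the gap you must either exhibit compatible choices explicitly (e.g.\ tie $F$ to the canonical connection on $\widehat{\Omega G}$ determined by $c_2$ and choose the curving accordingly) or cite the bundle-gerbe transgression theorem — at which point your argument essentially becomes the paper's.
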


\begin{proof}  
Recall that there is a map $\tau \colon P^{[2]} \to \Omega G$ defined by $p_1 \tau(p_1, p_2) = p_2$. 
The  induced $PU(\mathcal F)$ bundle associated to $P$ by the homomorphism $\phi$ gives rise to a lifting bundle gerbe $(Q, P)$ over $G$
where  $Q$ is the pullback to $P^{[2]}$ of the bundle $U(\mathcal F) \to PU(\mathcal F) $ by $\phi \circ \tau \colon P^{[2]} \to PU(\mathcal F)$. See \cite{Mur} 
for further details such as the bundle gerbe multiplication.  This bundle gerbe has Dixmier-Douady class $\omega$ and we can choose a bundle 
gerbe connection on $Q$ with curvature $R \in \Omega^2(P^{[2]})$ and curving $B \in \Omega^2(P) $  such that $R = \pi_1^*(B) - \pi_2^*(B) $
and $dB = \pi^*(\omega)$.  From the second equation we see that the transgression of $\omega$ is $B_{| \Omega G}$ where $\Omega G \subset P$ as the space of loops. 

Let  $P_0  \subset P^{[2]}$ be defined by  $P_0 = \{(e, \gamma) \mid \gamma \in \Omega G \}$ where $e$ is the constant identity loop. Clearly $P_0$ is diffeomorphic to $\Omega G$
and $\pi_1 \colon P^{[2]} \to P$ restricted to $P_0$ induces the identity map from  $\Omega G$ to the copy of $\Omega G$ in $P$. 
On the other hand $\pi_2$  restricted to $P_0$ is the constant map to $e \in P$.  Notice also that 
$\tau$ restricted to $P_0$ is the identity.    Note that in general the pullback 
connection by $\phi \circ \tau$ might not be a bundle gerbe connection as it might fail to respect the bundle 
gerbe multiplication but certainly the Chern class of the pullback bundle is the 
pullback of the Chern class by naturality.  So the class of $\theta$ is the class of $R = \pi_1^*(B) - \pi_2^*(B)$ restricted to $P_0 \simeq \Omega G$  which is the class of $B_{| \Omega G}$ as required. 

\end{proof} 

Let $B$ be a local potential of the Dixmier-Douady form, $\omega = dB;$ this can be defined in an open neighborhood of the unit in $G.$ In the special case $G=SU(2)$ it is well-defined
in the open set not containing the matrix $-1.$ In the same way as the curvature on the loop space is a transgression of $\omega,$ we may choose a local connection $A$ in the loop space
as a transgression of $B.$ On the other hand, since the curvature in the loop space is a pull-back $\phi^*F$ we can set $A= \phi^* \eta$ where $\eta$ is a local connection form $d\eta = F,$ 
wth respect to a local trivialization of $U(\mathcal F) \to PU(\mathcal F).$ With respect to this local trivialization we can fix a phase of the lift of $\phi(\gamma)\in PU$ to the unitary group $U$
by selecting a path $\gamma_s$   ($0\leq s \leq 1$) in the loop space with $\gamma_0$ the constant loop at $e\in G$ and $\gamma_1 = \gamma.$

Next choose a triple $g_1, g_2, g_3$ of elements in $G.$  For each $g_i$ choose a path $g_i(t)$ with $g_i(0) =e$ and $g_i(1) = g_i.$ We have now four closed 1-simplices 
as follows: The first is the loop $\ell(g_1,g_2)$ defined as the composition of $g_1(t), g_1 g_2(t)$ and $-(g_1g_2 (t))=
(g_1g_2)(1-t).$  (The minus sign means the opposite orientation as compared to the path $(g_1g_2)(t).$ The second and third are $\ell(g_2,g_3)$ and $\ell(g_1,g_3).$
finally the fourth is the left translated simplex $g_1\ell(g_2,g_3).$ Then it is easy to check that as a 1-simplex
\begin{equation} \ell(g_1,g_2) -g_1\cdot \ell(g_2,g_3) + \ell(g_1g_2,g_3) - \ell(g_1, g_2g_3) = 0. \label{1-cycle}
\end{equation}
For each of the 1-simplices above choose a singular 2-simplex $s(g_1, g_2) \dots$ such that $\delta s(g_1,g_2) = \ell(g_1,g_2)$ and so on; this is possible since $G$ is simply connected.
The sum of these 2-simplices is closed by the equation above and we may choose a 3-simplex $\Delta_3(g_1,g_2,g_3)$ such that its boundary is the sum $S(g_1,g_2,g_3)$  of  the 2-simplices $s.$ 

Given four elements $g_i\in G$ with $1\leq i\leq 4$ we can set
\begin{multline}
V(g_1, \dots, g_4) = \Delta_3(g_1,g_2,g_3)  + \Delta_3(g_1g_2, g_3, g_4) +\Delta_3(g_1, g_2g_3,g_4) \\+ \Delta_3(g_1,g_2,g_3g_4) + g_1\cdot \Delta_3(g_2,g_3,g_4)
\end{multline}
and then $\delta V(g_1,g_2,g_3,g_4) =0$ since the boundaries of the $\Delta$'s cancel pairwise. From this follows that
$$C_3(g_1,g_2,g_3) = e^{2\pi i \int_{\Delta_3(g_1,g_2,g_3)} \omega}$$
is a 3-cocycle in the sense that
$$C_3(g_1,g_2,g_3) C_3(g_1g_2,g_3,g_4)^{-1} C_3(g_1, g_2g_3, g_4)C_3(g_1,g_2,g_3g_4)^{-1} C_3(g_2,g_3,g_4) = 1.$$
We have used the fact that $\omega$ is an integral form, so its integal $I$ over the 3-simplex without boundary is in $\mathbb{Z}$ and so $\exp(2\pi i I)=1.$ 

In a local trivialization of the pull-back of the bundle $U(\mathcal F) \to PU(\mathcal F)$ to the loop group $\Omega G$ the phase of the unitaries can be fixed by a  parallel transport 
along a path $\ell_s$ in $\Omega G$ connecting the identity $\ell_0=1$ to a given loop $\ell= \ell_1.$  In particular, the phases of the parallel transports around the 1-simplices
$\ell(g_i,g_j)$ are detetermined  as $\exp2\pi (i\int_{s(g_i,g_j)} B)$ since the connection in the loop space is the transgression of the local form $B.$ The phases of the parallel
transports along the faces of the 3-simplex $\Delta_3(g_1,g_2,g_3)$ add up to $ \exp(2\pi i \int_{\partial \Delta_3(g_1,g_2,g_3)} B)$ which is equal to $\exp(2\pi i\int_{\Delta_3(g_1,g_2,g_3)}  \omega)$
by Stokes' theorem. Thus we can state
\begin{theorem} The obstruction to lifting the 2-cocycle  $\ell$ to the central extension $U(\mathcal F)$ of $PU(\mathcal F)$ is   the class of 3-cocycle $C_3.$
Changing the cocycle $C_3$ by a coboundary $\delta b$ is the same as changing the lifts of the loops $\ell$ to the central extension by phases $b(g_i,g_j).$ 
\end{theorem}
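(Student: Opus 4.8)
The plan is to read the statement as the standard obstruction problem for lifting a projective 2-cocycle through the central extension $S^1 \to U(\mathcal F) \to PU(\mathcal F)$, and then to identify the resulting obstruction class with $C_3$ by reinterpreting the transgression computation already carried out above. First I would fix the cochain data: the assignment $(g_i,g_j) \mapsto \phi(\ell(g_i,g_j)) \in PU(\mathcal F)$ is the 2-cocycle $\ell$ of the statement. It satisfies the group 2-cocycle identity because the underlying 1-chains cancel in the relation \eqref{1-cycle} and $\phi$ is multiplicative under concatenation of paths (holonomy followed by the projective representation). To lift is to choose, for each pair, an element $\tilde\ell(g_i,g_j) \in U(\mathcal F)$ over $\phi(\ell(g_i,g_j))$. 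Since $S^1$ is central, the failure of the $\tilde\ell$ to satisfy the cocycle identity is an $S^1$-valued 3-cochain; the usual diagram chase for central extensions shows it is closed, defining a class in $\mathrm{H}^3(G,S^1)$, and a lift to a genuine $U(\mathcal F)$-valued 2-cocycle exists precisely when that class vanishes.

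Next I would compute this obstruction with the geometric lifts supplied by parallel transport. In the local trivialization of $U(\mathcal F) \to PU(\mathcal F)$ with connection $\eta$, the transgression theorem proved above identifies the pulled-back connection $A = \phi^*\eta$ on $\Omega G$ with the transgression of the local curving $B$, where $\omega = dB$. Fixing the phase of $\tilde\ell(g_i,g_j)$ by parallel transport along the path $\ell_s$ from the constant loop to $\ell(g_i,g_j)$ therefore assigns it the phase $\exp(2\pi i \int_{s(g_i,g_j)} B)$, the 2-simplex $s(g_i,g_j)$ being the region swept out by $\ell_s$. The obstruction 3-cochain is the product of these four face-phases; by \eqref{1-cycle} it equals $\exp(2\pi i \int_{\partial\Delta_3(g_1,g_2,g_3)} B)$, and Stokes' theorem with $dB = \omega$ turns this into $\exp(2\pi i \int_{\Delta_3(g_1,g_2,g_3)} \omega) = C_3(g_1,g_2,g_3)$. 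This is precisely the computation in the paragraph preceding the statement, now read as the obstruction cochain.

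Finally I would establish the second assertion and, with it, the well-definedness of the class. Any two systems of lifts differ by an $S^1$-valued 2-cochain $b$, via $\tilde\ell'(g_i,g_j) = b(g_i,g_j)\,\tilde\ell(g_i,g_j)$; substituting into the obstruction cochain, the phases $b$ reassemble into the group coboundary $\delta b$, so the new 3-cocycle is $C_3 \cdot \delta b$. This simultaneously proves that changing the lift phases by $b$ changes $C_3$ by $\delta b$, and that the class $[C_3]$ is independent of the chosen lift, hence is the genuine obstruction.

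I expect the main obstacle to be the well-definedness of $C_3$ against the remaining auxiliary choices — the paths $g_i(t)$, the 2-simplices $s(g_i,g_j)$, and the filling 3-simplex $\Delta_3$ — together with the independence of the parallel-transport phase from the path $\ell_s$. Both reduce to the integrality of $\omega$: replacing $\Delta_3$ by another filling of the same boundary alters the integral by $\int \omega$ over a closed 3-cycle, which lies in $\mathbb Z$ and contributes the trivial phase, while changing $\ell_s$ alters the phase by a period of $\theta = \phi^*F$ around a loop in $\Omega G$, which lies in $2\pi\mathbb Z$. Making these integrality arguments precise, and checking that the residual dependence on the $s(g_i,g_j)$ is exactly the coboundary ambiguity of the previous paragraph, is the technical heart of the argument.
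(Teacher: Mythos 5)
Your proposal is correct and follows essentially the same route as the paper: fix the phases of the lifts by parallel transport along paths in $\Omega G$ (so each face contributes $\exp(2\pi i \int_{s(g_i,g_j)} B)$), identify the resulting discrepancy with $\exp(2\pi i \int_{\Delta_3}\omega) = C_3$ via Stokes' theorem, use integrality of $\omega$ for the cocycle identity and the remaining choice-independence, and observe that changing lifts by phases $b(g_i,g_j)$ changes $C_3$ by $\delta b$. The only difference is presentational — you make explicit the standard central-extension diagram chase and the well-definedness checks that the paper leaves implicit in the discussion surrounding the theorem.
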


An equivalent way to arrive at the 3-cocycle $C_3$ is as follows.
Denoting the parallel transports along the edge loops  $ \ell$  by the same symbols we have
then the cocycle property (which follows from \ref{1-cycle} )
$$ \ell(g_1, g_2)\ell(g_1g_2,g_3) =    \ell(g_2, g_3)^{g_1}\ell(g_1,g_2g_3)$$
where the first factor on the right denotes the parallel transport  from $g_1$ to $g_1g_2$ and continuing via $g_1g_2g_3$ back to the vertex $g_1.$

Next we fix a lift $\hat{\ell}$ to the central extension $\widehat{\Omega G}$ acting in the Fock space $\mathcal{F}_0.$ The lift is fixed by a parallel transport from the constant loop
to the loop $\ell$ as explained previously. The cocycle property above does not hold anymore for the lifts but
there is a correction factor,
$$\hat{\ell}(g_1, g_2) \hat{\ell}(g_1g_2, g_3) = \hat{\ell}(g_2, g_3)^{g_1} \hat{\ell}(g_1, g_2g_3)C_3(g_1,g_2,g_3),$$
with the   $S^1$ valued 3-cocycle $C_3.$

The cocycle $C_3$ depends on the choices of the lifts $\hat{\ell}$ modulo a coboundary. However, since the loops involved come from the boundaries of the faces in a tetraed we can
use this to fix the phases. The central extension $\widehat{\Omega G}$ comes equipped with a canonical connection: The tangent bundle of a Lie group is trivial and the tangent spaces
can be identified as the Lie algebra of the group. The Lie algebra of the central extension is as a vector space the direct sum of the Lie algebra of $\Omega G$ and the 1-dimensional
Lie algebra  $i\mathbb{R}. $ The projection onto the center defines the connection 1-form $\xi$ on $\widehat{\Omega G}.$  This connection is actually the connection coming
from the map $\Omega G \to PU(\mathcal F)$ through canonical quantization \cite{Lu}. The face maps to $G$ define a contraction of the parallel
transport along each  boundary loop
to a constant loop at $e.$ The contraction defines a path in $\Omega G$ starting from the constant loop and using the canonical connection this defines an element in $\widehat{\Omega G}.$
If $\delta,\delta'$ are a pair of face maps with a common boundary 1-simplex then the phases differ by a factor $\psi(\delta,\delta')\in S^1= \mathbb{R}/\mathbb{Z}.$ Since $\mathrm{H}^2(G,
\mathbb{Z})=0$ the union $\delta\cup \delta'$ is a boundary of some volume $V\subset G$ and then $\psi(\delta, \delta')= \exp(2\pi i \int_V \omega).$

Returning to the calculation of the cocycle $C_3(g_1,g_2,g_3)$:  The phases of the parallel transports along the edges of the tetraed are fixed by the faces. The difference of the phases is
the given by 
$$C_3(g_1,g_2,g_3) = \exp(2\pi i\int_{\Delta_3(g_1,g_2,g_3)}  \omega).$$

As stated before in Section 1, a parallel  transport from a point $g$ to a point $g'$ can be viewed as a functor 
in the 1-particle Hilbert space $H.$ However, it does not define an operator in
a Fock space.  But the compositions of the functors along the edges of the faces of $\Delta_3$ define projective operators and missmatch of the phases of the composed lifted unitary operators
give rise to the $S^1$ valued 3-cocycle.

In the case of a topologically trivial gerbe, $\omega = dB$ for some globally defined 2-form on $G,$ the parallel transport in the loop space is defined by the transgression of $B$
giving the potential in the loop space as
$$ A(f; X) = \int_{S^1}  B(f^{-1}df, X)$$
for a loop $f$ and a tangent vector $X$ at  $f.$

That gives the parallel transport from the constant loop at $e$ to the boundary loop of the 2-simplex $s(g_1, g_2)$ (with vertices $e, g_1, g_1g_2$) as
$$d(g_1, g_2) = \exp( 2\pi i\int_{s_2} B).$$
In this case the 3-cocycle $C_3$ is trivial,
$$C_3(g_1, g_2, g_3) = d(g_1,g_2) d(g_1g_2, g_3) d(g_1, g_2 g_3)^{-1} g_1\cdot d(g_2, g_3)^{-1}$$
where in the last factor the intergation is around the face on the 3-simplex with vertices at $g_1, g_1g_2, g_1g_2g_3.$

 \section{The case of $G= \mathbb{R}^3$}
 
 The group $G=\mathbb{R}^3$ has an unitary representation in the Hilbert space
 $H$ of square integrable periodic  functions on $[0,1]$ with values in $\mathbb{C}^3$ 
 through multiplication $z_k \mapsto e^{2\pi i x_k} z_k$ for $k=1,2, 3.$
 This defines also an action of the loop group $LG$ in $H$ through point-wise
 multiplication by the phase $e^{2\pi i x_k(t)}.$ 
 
 Through the canonical quantization in a fermionic Fock space the Lie algebra valued loops $X,Y: S^1 \to i{\mathbb  R}^3 $ are 
represented projectively with the 2-cocycle
$$c_2(X,Y) =\frac{1}{2\pi i}  \int_{S^1}  <X,dY>.$$
This 2-cocycle is trivial in the cohomology with coefficients in the space of Frechet smooth functions on $\Omega G,$ 
   $c_2=\delta b$ with $b_f(X) = \frac{1}{4\pi i} \int_{S^1}  <d\log (f), X>.$ 
Let $c_2'$ be another trivial cocycle,
$$c_2'(f; X,Y) = \frac{1}{2\pi i}\int_{S^1} d\log f \wedge X \wedge Y.$$ 
One can check that $c_2'$ is the trangression of of the 3-form $\omega(X,Y,Z)= X\wedge Y \wedge Z $ on $G$ 
As a 3-form $\omega$ is exact, $\omega= dB$ with $B_w(X,Y) = w\wedge X \wedge Y$ for $w\in \mathbb{R}^3$ and $X,Y$ tangent vectors at $w.$
We can take $\omega$ as the Dixmier-Douady form of topologically trivial $PU$ bundle over ${\mathbb R}^3.$ In this case the argument in the previous
section lead to a 3-cocycle of ${\mathbb R}^3,$

$$C_3(X,Y,Z) =\exp(2\pi i \int_{\Delta_3} \omega)  = \exp(\frac{2\pi i}{6} X\wedge Y \wedge Z)$$
where $\Delta_3$ is the tetraed with vertices at the points $0, X, X+Y, X+Y+Z.$
This cocycle is a coboundary of the 2-cochain $R,$ with coefficients in the group $G= \mathbb{R}^3,$ 
$$R_z(X,Y) =  \exp(\pi i \, z  \wedge X \wedge Y).$$
This is again a consequence of the fact that the Dixmier-Douady form $\omega$ is trivial in cohomology.

We can easily generalize the discussion above to the case $G=\mathbb{R}^n$ for $n\geq 3.$ 
Each antisymmetric real valued tensor $a_{ijk}$ on $\mathbb{R}^n$ defines  a closed form
 $\omega= \sum_{i,j,k} a_{ijk} dx_i \wedge dx_j \wedge dx_k$ which can be taken as the Dixmier-Douady form of a trivial
gerbe on $G.$ This closed form is exact,
$\omega = dB$ with $B= \sum_{i,j,k} a_{ijk} x_i dx_j \wedge dx_k.$ 

The cocycle $C_3$ is recovered by the parallel transport argument as before. In this case we can take the path connecting the identity $0\in\mathbb{R}^n$ to 
a vector $X$ as the straight line $X(t) = tX.$ The loops $\ell$ are now triangles connecting the vectors $0,X, X+Y$, $0,X, X+Y+Z$, $0, X+Y, X+Y+Z$ and $X, X+Y, X+Y+Z$
and the cocycle $C_3(X,Y,Z)$ is again given by  the integral  of the 3-form $\Omega$ over  the 3-simplex with vertices at the points $0,X,X+Y, X+Y+Z,$ 
\begin{equation} C_3(X,Y,Z)= e^{2\pi i \sum_{ijk} \frac{1}{6} a_{ijk} X_i Y_j Z_k }.\label{3cocycle} 
\end{equation}

Although this cocycle for (nonzero $a$) is nontrivial as a group cocycle it is
however trivial as a transformation groupoid cocycle: The group $\mathbb{R}^n$ 
acts on itself by translations and $C= \delta b$ for the the 2-cochain
$b(u; X,Y) = C_3(u,X,Y)$
with
$$(\delta b)(X,Y,Z) = b(u; X,Y)^{-1}  b(u; X+Y, Z)^{-1}  b(u; X, Y+Z) b(u+X; Y,Z)$$

In the present setting the loops take values in $\mathbb{R}^n$ and $\rho(x)z_k =
e^{ix_k} z_k$ for $k=1,2,\dots, n.$ 
Each component defines a circle value function $e^{i f(t)}$ acting as a multiplication operator
in the 1-particle Hilbert space $H=L_2([0,1], \mathbb{C}^n).$ The cocycle $c_2$ \eqref{2cocycle} is nontrivial on the abelian
loop group. However, in the case of a family of Dirac operators $D_A = i \frac{d}{dt}
+ A$ coupled to an abelian vector potential $A$ (with values in $\mathbb{R}^n$) the relevant cohomology is
with coefficients in the space of Frechet smooth functions of the vector potential.  In this cohomology the cocycle becomes trivial:
we have $c_2 = \delta b_1$ where 
$$b_1(A; X) = \frac{1}{4\pi i} \int \sum_k A_k X_k dt$$
and the loop algebra element $X$ acts on $A$ through the gauge transformation $A \mapsto A + dX.$ 
For this reason the bundle of Fock spaces parametrized by the vector potentials
becomes equivariant with respect to the gauge action and can be pushed forward
to a bundle over the flat moduli space $\mathbb{R}^n= \mathcal{A}/\mathcal{G}$ of gauge potentials;
here $\mathcal{G}$ is the group of periodic functions  $[0,1] \to \mathbb{R}^n$ 
acting on potentials as 
$A\mapsto A+  df.$ 

\section{The case of a torus}

If we replace the gauge group $\mathbb{R}^n$ by the torus $T^n$ the situation
becomes different. All the maps $f:[0,1] \to \mathbb{R}^n$ which are periodic
modulo $\mathbb{Z}^n$ satisfy the Hilbert-Schmidt condition on off-diagonal blocks
with respect to the energy polarization; again, a function $f$ defines a multiplication
operator in the one-particle space through $z_k \mapsto e^{2\pi i  f_k}z_k.$ These functions $f$ van be viewed as loops $S^1 \to T^n.$
Now the group of gauge transformations $\mathcal{G}$ factorizes as a product of the
group $\mathcal{G}_0$ contractible
maps to $T^n$ (represented as loops on $\mathbb{R}^n$) and a group $\mathbb{Z}^n$
of maps  of
the form $f(t) = 0$ for $t\leq 0,$ $f(t) = t v$ for $0\leq t\leq 1$ with $v \in \mathbb{Z}^n$ and $f(t) =v$ for $t\geq 1.$ 

The moduli space of gauge potentials $\mathcal{A}/\mathcal{G}$ is now
the torus $T^n;$ we have $\mathcal{A}/\mathcal{G}_0 = \mathbb{R}^n$ and the second
factor in $\mathcal{G}$ is isomorphic to the subgroup $\mathbb{Z}^n \subset \mathbb{R}^n.$ In the case of $\mathbb{R}^n$ there was no restriction on the normalization of
the 3-cocycle [as a group cocycle or as a 3-form on $\mathbb{R}^n$] but in the case
of the torus the 3-cocycle must satisfy an integrality constraint in order that
the gerbe over $T^n$ is well-defined.

As explained in \cite{HaMi}, \cite{Mi17} (see also \cite{MiWa} Section 7) the 1-particle  Dirac hamiltonians can
be twisted in such a way that their K-theory class over the moduli space $T^n$ 
is nontrivial: the Chern character has a nonzero component $\omega_3$ in $\mathrm{H}^3(T^n, \mathbb{Z}).$ The basis in $\mathrm{H}^3(T^n,\mathbb{Z})$ is given by the
3-forms $\omega = \sum a_{ijk} dx_i \wedge dx_j\wedge dx_k$ where the $a's$ form
a basis of totally antisymmetric tensors of rank 3 with integral coefficients,
The pull-back with respect to the projection $\mathbb{R}^n \to T^n$ is the form
$d \sum a_{ijk} x_i dx_j\wedge dx_k.$ The quantum field theoretic construction
of a gerbe over the torus from a non zero class $[\omega]$ is recalled in the
Appendix. 

The 3-form part $\omega$ of the Chern character is the Dixmier-Douady class of
the projective vector bundle over $T^n$ obtained by canonical quantization of the
family of 1-particle Dirac operators. The pull-back of this bundle over $\mathbb{R}^n$
comes by projectivization of a vector bundle (the bundle of fermionic Fock spaces).
The group $\mathbb{Z}^n$ acts through an abelian extension of the Fock spaces.
The extension is defined by the 2-cocycle
$$ c_2(u; x, y)= e^{2\pi i \sum a_{ijk} u_i x_j y_k}$$
where $x,y\in \mathbb{Z}^n$ and $u\in\mathbb{R}^n$ and $\mathbb{Z}^n$ acts on
the functions of the vector $u$ as translations.

For integral coefficients $a_{ijk}$ the 3-cocycle \eqref{3cocycle}  is identically $=1$ when the arguments are in $\mathbb{Z}^n
\subset \mathbb{R}^n$ in conformity with the (projective) action of $\mathbb{Z}^n$
on the Fock spaces. The 3-cocycle is the obstruction to an extension of the $\mathbb{Z}^n$ action
to a $\mathbb{R}^n$ action on the bundle of Fock spaces.

{\bf Remark} The cocycle $c_2$ is also a group cocycle even in the case of constant
coefficients (no group action on $u$) but since the coboundary operator is
different  the cohomology with variable coefficients is different from the cohomology
with constant coefficients.

\section{Appendix} 
In this appendix we briefly recall the quantum field theoretic construction of a
gerbe over the torus using a twisted family of CAR algebra representations,
\cite{HaMi}, \cite{Mi17}.

A hermitean complex line bundle $L$ over the torus $T^n$ is characterized by a class $\omega$ in
$\mathrm{H}^2(T^n, \mathbb{Z}).$ Parametrizing the circles in the torus by the interval
$[0,1]$  the 2-cohomology is spanned  by antisymmetric bilinear forms on $\mathbb{R}^n$ such that $\omega(x,y) \in \mathbb{Z}$ for $x,y\in
\mathbb{Z}^n.$ The pull-back of $L$ over $\mathbb{R}^n$ is trivial and
the sections of that line bundle are complex valued functions $\psi$ such that
$$\psi(x + z) = \psi(x) e^{2\pi i \omega(x,z)}$$
for $z\in \mathbb{Z}^n.$

Next we construct a family of fermionic Fock spaces parametrized by vectors in
$\mathbb{R}^n.$ For each $k=1,2, \dots, n$ and $u,v\in H$ let $a_k(v), a^*_k(u)$
be generators of a CAR algebra with nonzero anticommutators
$$a^*_k(u) a_k(v) + a_k(v) a^*_k(u) = 2 <u,v>.$$
The generators for different lower indices are assumed to  commute. It will be
convenient to compactify the real line to the unit circle so we can take $H=
L_2(S^1, \mathbb{C}^n)$ and we can work with the orthonormal basis of Fourier modes
in each of the $n$ directions.

We twist the CAR algebra by the line bundle $L.$ This means that the families of creation
and annihilation operators are sections of the tensor products of $L$ or its dual and the CAR algebra. 
The sections are $\mathbb{Z}^n$ equivariant functions on $\mathbb{R}^n,$ 
that is, for  $x\in\mathbb{R}^n$  and  for $z\in\mathbb{Z}^n$
$$ a^*_k(u, x+z) = a^*_k(u,x) e^{2\pi i \omega(x,z)}, \,\, a_k(u,x+z) = a_k(u,x) e^{-2\pi i \omega(x,z)}.$$
The right-hand-side of the canonical anticommutation relations, when evaluated at a point
$x\in\mathbb{R}^n,$ is multiplied by the pairing of sections of $L,L^*$ involved in the
construction of $a^*_k(u,x) = a^*_k(u)\otimes \psi(x)$ and of $a_k(v)\otimes \xi.$

The  Fock vacuum is again annihilated by $a^*(u,x)$ and $a(v,x)$ for $u\in H_-$ and
$v\in H_+.$ (One could generalize this construction by allowing the modes for 
different lower index $k$ be twisted by different line bundles.) 

Thus the states with net particle number $N$ in the Fock space are twisted by 
the $N$:th tensor power of $L.$  

The group $\mathbb{Z}^n$ acts as automorphisms of the twisted  CAR algebra by
$$g(p) a^*(u,x) g(p)^{-1}= a^*(p\cdot u, x+p) = e^{2\pi i \omega(x,p)} a^*(p\cdot u,x)$$
where $p$ acts on a function $u(\zeta)$ by multiplication by a phase, $u_k
\mapsto e^{2\pi i \zeta p_k} u_k,$
that is, the Fourier modes  are shifted by $p$ units.
Likewise, for the annihilation operators
$$g(p) a(u,x) g(p)^{-1} = e^{-2\pi i \omega(x,p)} a((-p)\cdot u, x).$$
The action of $g(p)$ in the Fock spaces parametrized by $x$ is now completely defined by
fixing the action on the vacuum vector $\psi.$ This is easiest done thinking the vectors
as elements in the semi-infinite cohomology (in physics terms, the \lq Dirac sea').
For $n=1$ the vacuum is symbolically the semi-infinite  product
$$ \psi = a^*_0 a^*_{-1} a^*_{-2} \cdots$$
where the lower index refers to the Fourier modes in $L_2(S^1).$ For general $n$ the
vacuum is defined in a similar way inserting the non-negative Fourier modes for each
of the $n$ components.  The CAR generators are labelled by a double index $(k,j)$
with $k\in\mathbb{Z}$ and $j=1,2,\dots n. $ The action of $g(p)$ on the vacuum is now defined as a shift
operator: The index $k$ of the element $a^*_{k,j}$  is shifted by the integer  $p_j$ for $j=1,2, \dots, n,$ $k\mapsto k+ p_j.$ 
 
Because of the phase shifts when the CAR algebra generators are conjugated by $g(p)$
the product $g(p) g(q)$ is not equal to $g(p+q)$ but they differ by an $x$ dependent phase,
$$g(p)g(q) = C(x; p,q) g(p+q)= e^{2\pi i N \omega(x,p)} g(p+q)$$
where $N$ is the particle number of the state $g(q)\psi,$ that is, $N= \sum_{j=1}^{j=n} q_j.$

{\bf Remark} The projective vector bundles over $T^n$ are classified by elements of
$\mathrm{H}^3(T^n, \mathbb{Z}.$ Representatives of these elements can be written as
de Rham forms $\Omega = \sum a_{ijk} dx_ \wedge dx_j \wedge dx_k$ where the coefficients
$a_{ijk}$ are integers. The pull-back of $\Omega$ with respect to the projection
$\pi: \mathbb{R}^n \to T^n$ is $\pi^*\Omega = d\theta= d\sum a_{ijk} x_i dx_j \wedge dx_k.$
Evaluating $\theta$ for tangent vectors $u,v$ in the integral lattice $\mathbb{Z}^n
\subset \mathbb{R}^n$ and exponentiating gives the 2-group cocycle
$$C'(x; u,v) = e^{2\pi i \sum a_{ijk} x_i u_j v_k}$$
where the group $\mathbb{Z}^n$ acts on the vector $x$ by $x\mapsto x+u.$
According to the discussion in \cite{MiWa}, Section 7.1, there is a 1-1 correspondence
between the group cohomology $\mathrm{H}_{grp}^2(\mathbb{Z}^n, A)$ and the de Rham
cohomology $\mathrm{H}^3(T^n, \mathbb{Z})$ where $A$ is the $\mathbb{Z}^n$ module
of (smooth) functions $T^n\to S^1.$ The above map $\{c_{ijk}\} \to C'$
realizes this isomorphism.

{\bf Example}: When $n=3$ the cocycle $C$ is equivalent to $C'$ for the choice
$a_{ijk} =\alpha \epsilon_{ijk}$ where 
where $\epsilon$
is totally antisymmetric tensor with $\epsilon_{123}=1$ and $\alpha = 2(\omega_{12}
+\omega_{23} + \omega_{31})$ where $\omega= \omega_{12} dx_1\wedge dx_2 + \omega_{31}
dx_3\wedge dx_1 + \omega_{23}dx_2\wedge dx_3.$
This is seen by projecting the exponent in $C$ to its totally antisymmetric
component.

 \end{document}